\ifpdf\setlength{\pdfpagewidth}{8.5in}\setlength{\pdfpageheight}{11in}\fi
\renewcommand{\paragraph}[1]{\medskip\noindent{\bf #1}}
\newtheorem{invariant}{Invariant}
\newcommand{\Bcal}{\mathcal{B}}
\renewcommand{\abs}[1]{\left| #1 \right|}
\title{Equivalence between Priority Queues and Sorting in External Memory
  \\ 
%{\normalsize \ifthenelse{\soda}{(Extended abstract)}{(Full paper)}}
}
\author{Zhewei Wei \\ HKUST \and Ke Yi \\ HKUST}
\begin{document}

\begin{titlepage}
\maketitle

\begin{abstract}

A priority queue is a fundamental data structure that maintains a dynamic
ordered set of keys and supports the followig basic operations: insertion
of a key, deletion of a key, and finding the smallest key.  The complexity
of the priority queue is closely related to that of sorting: A priority
queue can be used to implement a sorting algorithm trivially.
Thorup~\cite{thorup2007equivalence} proved that the converse is also true
in the RAM model. In particular, he designed a priority queue that uses the
sorting algorithm as a black box, such that the per-operation cost of the
priority queue is asymptotically the same as the per-key cost of sorting.
In this paper, we prove an analogous result in the external memory model,
showing that priority queues are computationally equivalent to sorting in
external memory, under some mild assumptions.  The reduction provides a
possibility for proving lower bounds for external sorting via showing a
lower bound for priority queues.

%%% Local Variables: 
%%% mode: latex
%%% TeX-master: "paper"
%%% End: 

\end{abstract}
\end{titlepage}

% !TEX root = paper.tex

%\hyphenation{half-spaces}
\section{Introduction}
\label{sec:introduction}
The priority queue is an abstract data structure of fundamental importance.
A priority queue maintains a set of keys and support the following
operations: insertion of a key, deletion of a key, and findmin, which
returns the current minimum key in the priority queue.  It is well known
that a priority queue can be used to implement a sorting algorithm: we
simply insert all keys to be sorted into the priority queue, and then
repeatedly delete the minimum key to extract the keys in sorted order.
Thorup \cite{thorup2007equivalence} showed that the converse is also true
in the RAM model. In particular, he showed that given a sorting algorithm
that sorts $N$ keys in $NS(N)$ time, there is a priority queue that uses
the sorting algorithm as a black box, and supports insertion and deletion
in $O(S(N))$ time, and findmin in constant time.  The reduction uses linear
space.  The main implication of this reduction is that we can regard the
complexity of internal priority queues as settled, and just focus on
establishing the complexity of sorting.  Algorithmically, it also gives new
priority queue constructions by using the fastest (integer) sorting
algorithms currently known: an $O(N\log\log N)$ deterministic algorithm by
Han~\cite{han2004deterministic} and an $O(N\sqrt{\log\log N})$ randomized
one by Han and Thorup~\cite{han2002integer}.

\iffalse In this paper, we prove an analogous result in the external memory
model, that priority queues are almost computationally equivalent to
sorting in the I/O model. More precisely, we design an external priority
queue that uses the sorting algorithm as a black box, such that the update
cost of the priority queue is essentially the same as the per key I/O cost
of the sorting algorithm. Let $\log^{()}$ denote the nested logarithmic
function, i.e., $\log^{(0)}x=x$ and $\log^{(i)}=\log (\log^{(i-1)}x)$.
Given a sorting algorithm that sorts $N$ keys in $NS(N)/B$ I/Os, our
priority queue uses linear space, and supports a sequence of $N$ insertion,
deletion and findmin operations in $O(\frac{1}{B}\sum_{i\ge
  0}S(B\log^{(i)}\frac{N}{B}))$ amortized I/Os per operation. The reduction
uses $O(B)$ memory. The tightness of this reduction can be justified in two
aspects. First, if the main memory has size
$\Omega(B\log^{(c)}\frac{N}{B})$ for any constant $c$, our priority queue
supports all operations with $O(S(N)/B)$ amortized I/Os.  Second, if the
main memory only has size $\Theta(B)$, our priority queue supports all
operations with $O(S(N)/B)$ amortized I/O cost for $S(N)= \Omega(2^{
  \log^{*}\frac{N}{B} })$, and $O(S(N)\log^*\frac{N}{B}/B)$ amortized I/O
cost for $S(N)=o(2^{ \log^{*}\frac{N}{B} })$. Note that $2^{\log^*
  \frac{N}{B}}=o(\log^{(i)}\frac{N}{B})$ for any constant $i$, so it is
very unlikely that a sorting algorithm could achieve $S(N)=o(2^{
  \log^{*}\frac{N}{B} })$, meaning that our reduction is essentially tight
even with $\Theta(B)$ main memory.  \fi

In this paper, we prove an analogous result in the external memory model
(the I/O model), showing that priority queues are almost computationally
equivalent to sorting in external memory. We design a priority queue that
uses the sorting algorithm as a black box, such that the update cost of the
priority queue is essentially the same as the per-key I/O cost of the
sorting algorithm.  The priority queue always has the current minimum key
in memory so findmin can be handled without I/O cost.  Our priority queue
is a non-trivial generalization of Thorup's, which is fundamentally an
internal structure.  The main reasons why Thorup's structure does not work
in the I/O model are that it cannot flush the buffers I/O-efficiently, and
that it does not specify any order for performing the flush and rebalance
operations. Moreover, deletions are supported in a very different way in
the I/O model; we have to do it in a lazy fashion in order to achieve
I/O-efficiency.

\subsection{Our results} 
Let us first recall the standard I/O model \cite{aggarwal:input}: The
machine consists of an internal memory of size $M$ and an infinitely large
external memory.  Computation can only be carried out in internal memory.
The external memory is divided into blocks of size $B$, and data is moved
between internal and external memory in terms of blocks.  We measure the
complexity of an algorithm by counting the number of I/Os it performs,
while internal memory computation is free.

Our main result is stated in the following theorem:
\begin{theorem}  
\label{thm:priority_queue}
Suppose we can sort up to $N$ keys in $NS(N)/B$ I/Os in external memory,
where $S$ is a non-decreasing function. Then there exists an external
priority queue that uses linear space and supports a sequence of $N$
insertion and deletion operations in $O(\frac{1}{B}\sum_{i\ge
  0}S(B\log^{(i)}\frac{N}{B}))$ amortized I/Os per operation.  Findmin can
be supported without I/O cost.  The reduction uses $O(B)$ internal memory
and is deterministic.
\end{theorem}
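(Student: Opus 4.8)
The plan is to build an external-memory analogue of Thorup's hierarchical priority queue, with the recursion on problem size replaced by the tower of iterated logarithms appearing in the bound. The structure is a stack of $L+1$ \emph{levels}: level $0$ is a plain in-memory buffer holding the $\Theta(B)$ smallest keys, and level $j$ holds at most $\Theta(N_j)$ keys on disk in a small number of sorted runs, where the sizes $N_0=\Theta(B),N_1,\dots,N_L=\Theta(N)$ run through the sequence $B\log^{(i)}(N/B)$ in reverse, so that there are $L=\Theta(\log^*(N/B))$ levels. Fresh insertions are written to level $0$; as a key turns out to be large it percolates up through the levels (see below). In internal memory we keep, besides level $0$, an $O(B)$-word \emph{digest} --- the heads of all runs and various counters --- from which the current global minimum can always be read, so \texttt{findmin} costs no I/O. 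A \texttt{delete}$(x)$ is lazy: a \emph{ghost} of $x$ is injected at level $0$ and percolates up exactly like an ordinary key, a ghost and its twin annihilating whenever they meet in the input of a re-sort; a global count of outstanding ghosts keeps the level capacities honest, and \texttt{findmin} skips a digest entry that has a matching ghost in level $0$.

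\noindent\textbf{Key steps.}
(1) Describe, for one level $j$, how an \emph{overflow} is resolved: call the black-box sorter once on the $\Theta(N_j)$ keys at level $j$, keep the smaller half as $O(1)$ sorted runs at level $j$, and push the larger half to level $j{+}1$, merging it into that level's runs; symmetrically, when extractions leave level $j$ too sparse, refill it from level $j{+}1$. (2) Observe that one level-$j$ overflow sorts $\Theta(N_j)$ keys, costing $O(N_jS(N_j)/B)$ I/Os, whereas the accompanying merges, scans and digest updates cost only $O(N_j/B)$, of strictly lower order. (3) Show that each key takes part in only $O(1)$ level-$j$ re-sorts per $\Omega(N_j)$ operations, so a level-$j$ re-sort can be charged to the $\Omega(N_j)$ operations that triggered it at amortized cost $O(S(N_j)/B)$ per operation \emph{at level $j$}; formalize this with a potential $\Phi=\sum_j c_j\cdot(\textrm{fill of level }j)$ whose weight $c_j$ is, up to constants, the per-key cost of flushing from level $j$ to the top. (4) Sum the $O(S(N_j)/B)$ charges over all $L{+}1$ levels; since $\{N_j\}=\{B\log^{(i)}(N/B):0\le i\le L\}$, this is $O\!\big(\frac1B\sum_{i\ge0}S(B\log^{(i)}\frac NB)\big)$ amortized per operation. (5) Verify the remaining claims: with these thresholds no level is ever more than a constant factor above capacity, so the levels jointly occupy $O(N)$ words; the digest together with the workspace of a single re-sort fits in $O(B)$ internal memory; and, the only subroutine being the assumed deterministic sorter, the reduction is deterministic.

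\noindent\textbf{Main obstacle.}
The delicate part --- exactly what prevents Thorup's internal structure from porting verbatim --- is the interaction between \emph{scheduling} and \emph{invariant maintenance}. One has to fix the level capacities, the split ratio between two adjacent levels, and the order in which overflows are resolved when several levels are simultaneously unbalanced, so that: (i) after every re-sort the invariant that the global minimum is visible in the digest is restored --- in particular, a batch of large keys just inserted near the bottom must not leave a smaller key stranded at a higher level; (ii) no key oscillates between two levels, which is exactly what underlies the ``$O(1)$ re-sorts per $\Omega(N_j)$ operations'' bound and hence the whole cost analysis; and (iii) each re-sort is realizable as a single black-box sort on $\Theta(N_j)$ keys plus an $O(1)$-way merge using only $O(B)$ internal memory. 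At the same time one must ensure that a ghost is never stored at a higher level than its twin --- otherwise the twin could reach the digest and be returned by \texttt{findmin} before it is cancelled --- which is arranged by routing ghosts through re-sorts consistently with ordinary keys, backed up by the in-memory check in \texttt{findmin}. Once these invariants are pinned down, the telescoping in step (4) is essentially the only computation left.
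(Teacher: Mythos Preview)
Your outer skeleton---$O(\log^*(N/B))$ levels of sizes $N_j=B\log^{(L-j)}(N/B)$, with a level-$j$ re-sort every $\Theta(N_j)$ operations contributing $O(S(N_j)/B)$ amortized, then summing over $j$---is exactly the paper's layer decomposition, and your step~(4) is correct. The gap is in steps (1)--(2), at the phrase ``push the larger half to level $j{+}1$, merging it into that level's runs \dots\ the accompanying merges cost only $O(N_j/B)$.'' Adjacent level sizes differ super-polynomially: $N_{j+1}/N_j=\Theta\bigl((N_{j+1}/B)/\log(N_{j+1}/B)\bigr)$, which at the top is $\Theta((N/B)/\log(N/B))$. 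So level $j{+}1$ absorbs $\Theta(N_{j+1}/N_j)$ pushes between its own re-sorts. A genuine merge into level $j{+}1$ costs $\Omega(N_{j+1}/B)$ per push and is unaffordable; appending as a new run instead leaves level $j{+}1$ holding up to $\Theta(N_{j+1}/N_j)$ runs. Because fresh insertions enter at level~$0$ with arbitrary values and percolate up, these runs are \emph{not} sorted relative to one another. Two things then break: the digest of run heads has total size $\sum_j N_{j+1}/N_j\gg B$, contradicting the $O(B)$-memory claim; and ``refill level $j$ from level $j{+}1$'' now means extracting the $\Theta(N_j)$ smallest keys from $\Theta(N_{j+1}/N_j)$ mutually unsorted runs, which with $M=O(B)$ (hence only $O(1)$-way merges) forces a full re-sort of level $j{+}1$ \emph{per refill}---every $\Theta(N_j)$ rather than every $\Theta(N_{j+1})$ operations. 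That kills the amortization in step~(3).

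The paper closes this gap with an additional tier of structure that your plan omits. First, incoming keys are routed \emph{by value}: the memory buffer is flushed through a layer navigation list directly to the layer buffer where each key belongs, so the invariant ``lower-layer keys $<$ higher-layer keys'' always holds and the minimum sits in the head. Second---and this is the missing idea---each layer $X$ is internally subdivided into $O(\log(X/\Phi_X))$ \emph{levels} of geometrically increasing sizes $\Theta(8^j\Phi_X)$, with the keys of level~$j$ stored in $\Theta(8^j)$ \emph{base sets} of size $\Theta(\Phi_X)$ that are sorted relative to one another but not internally. Rebalancing between adjacent levels inside a layer moves only pointers to base sets along a navigation list, never the keys themselves; a real re-sort happens only at the boundary between two layers, where the top level of layer $X$ and the bottom level of layer $\Psi_X$ both have size $\Theta(X)$, so that sort legitimately amortizes over $\Theta(X)$ operations. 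All buffer sizes are chosen so that every flush has fan-out $O(|\text{buffer}|/B)$, which is precisely what makes the distribution step affordable with $O(B)$ memory. Without this inner level/base-set machinery, a single-tier hierarchy cannot support interleaved pushes and refills at the claimed cost.
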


The first implication of Theorem~\ref{thm:priority_queue} is that if the
main memory has size $\Omega(B\log^{(c)}\frac{N}{B})$ for any constant $c$,
then our priority queue supports insertion and deletion with $O(S(N)/B)$ amortized
I/O cost. This is because $S(N)=0$ when $N\le M$.  Even if $M=O(B)$, the
reduction is still tight as long as the function $S$ grows not too slowly.
More precisely, we have the following corollary:
\begin{corollary}
\label{cor:priority_queue}
For $S(N)= \Omega(2^{ \log^{*}\frac{N}{B} })$, the priority queue supports
updates with $O(S(N)/B)$ amortized I/O cost; for $S(N)=o(2^{
  \log^{*}\frac{N}{B} })$, the priority queue supports updates with
$O(S(N)\log^*\frac{N}{B}/B)$ amortized I/O cost.
\end{corollary}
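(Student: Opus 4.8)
\medskip\noindent\emph{Proof strategy.}\quad The plan is to obtain the corollary directly from Theorem~\ref{thm:priority_queue} by estimating the sum $\Sigma:=\sum_{i\ge 0}S\!\left(B\log^{(i)}\frac{N}{B}\right)$ under each of the two hypotheses. As a first step I would argue that $\Sigma$ really has only $O(\log^{*}\frac{N}{B})$ nonzero terms: the iterated logarithm $\log^{(i)}\frac{N}{B}$ falls below any fixed constant after $\log^{*}\frac{N}{B}+O(1)$ steps, and in the setting of the corollary the main memory has size $M=\Theta(B)$, so $S(m)=0$ as soon as $m\le M$, which happens once $\log^{(i)}\frac{N}{B}=O(1)$. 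Hence it suffices to bound the first $O(\log^{*}\frac{N}{B})$ terms.

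The case $S(N)=o(2^{\log^{*}\frac{N}{B}})$ is then essentially free: $S$ is non-decreasing and $\log^{(i)}\frac{N}{B}\le\frac{N}{B}$ for every $i$, so each term is at most $S(N)$, giving $\Sigma=O\!\left(S(N)\log^{*}\frac{N}{B}\right)$ and thus amortized cost $O\!\left(S(N)\log^{*}\frac{N}{B}/B\right)$ by Theorem~\ref{thm:priority_queue}.

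The real work is the case $S(N)=\Omega(2^{\log^{*}\frac{N}{B}})$, where the target is $\Sigma=O(S(N))$. The structural fact I would exploit is $\log^{*}\!\left(\log^{(i)}\frac{N}{B}\right)=\log^{*}\frac{N}{B}-i$: passing from an argument to its logarithm decreases the iterated-log value by exactly one, so the yardstick $2^{\log^{*}(\cdot)}$ is halved at each step. Applying the hypothesis at scale $\log^{(i)}\frac{N}{B}$ therefore pins the $i$-th term from below by $\Omega\!\left(2^{\log^{*}\frac{N}{B}-i}\right)$, while monotonicity pins it from above by $S(N)$. I would turn this into a reverse induction establishing $\sum_{j\ge i}S\!\left(B\log^{(j)}\frac{N}{B}\right)=O\!\left(S\!\left(B\log^{(i)}\frac{N}{B}\right)\right)$, so that the partial sums form a geometric series with ratio bounded away from $1$ and the whole sum is dominated by its leading term $S(N)$.

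The step I expect to be the main obstacle is precisely this geometric decay. A function assumed only to be non-decreasing need not shrink by a constant factor when its argument is replaced by a logarithm — it could be flat over a long range — so the decay must come from how the growth hypothesis constrains $S$ simultaneously at all the scales $\log^{(i)}\frac{N}{B}$, or, more pragmatically, from the mild regularity that any sorting bound enjoys (every natural one, such as $S(N)=\Theta(\log_{M/B}\frac{N}{B})$, already makes the $i$-th term $\Theta(\log^{(i+1)}\frac{N}{B})$, which telescopes to $O(S(N))$ at once). Isolating the exact regularity of $S$ needed for the sharp $O(S(N)/B)$ bound, as opposed to the always-valid $O(S(N)\log^{*}\frac{N}{B}/B)$, is the delicate part of the argument.
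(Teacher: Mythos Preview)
Your approach matches the paper's. For the second clause you do exactly what the paper does: bound each of the $O(\log^{*}\frac{N}{B})$ terms by $S(N)$. For the first clause the paper's argument is even terser than your sketch: it simply says to plug in the boundary function $S(N)=2^{\log^{*}\frac{N}{B}}$ and observe that the terms $S\!\bigl(B\log^{(i)}\frac{N}{B}\bigr)=2^{\log^{*}\frac{N}{B}-i}$ decay geometrically, which is precisely the identity $\log^{*}\!\bigl(\log^{(i)}\frac{N}{B}\bigr)=\log^{*}\frac{N}{B}-i$ you isolate. The subtlety you flag --- that a lower bound $S(N)=\Omega(2^{\log^{*}\frac{N}{B}})$ together with monotonicity does not by itself force geometric decay of the sequence for an \emph{arbitrary} such $S$ --- is genuine, and the paper does not address it either; it treats the boundary function as representative and leaves the extension implicit.
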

The first part can be verified by plugging $S(N)=2^{\log^*\frac{N}{B}}$
into Theorem~\ref{thm:priority_queue} and showing that the
$S(B\log^{(i)}\frac{N}{B})$'s decrease exponentially with $i$. For the
second part, we simply relax all the $S(B\log^{(i)}\frac{N}{B})$'s to
$S(N)$. Note that $2^{\log^* \frac{N}{B}}=o(\log^{(c)}\frac{N}{B})$ for any
constant $c$, so it is very unlikely that a sorting algorithm could achieve
$S(N)=o(2^{ \log^{*}\frac{N}{B} })$.  No such algorithm is known, even in
the RAM model.  Therefore, we can essentially consider our reduction to be tight.

\subsection{Related work}
\label{subsec:related_work}
Sorting and priority queues have been well studied in the comparison-based
I/O model, in which the keys can only be accessed via comparisons. Aggarwal
and Vitter~\cite{aggarwal:input} showed that
$\Theta(\frac{N}{B}\log_{M/B}\frac{N}{B})$ I/Os are sufficient and
necessary to sort $N$ keys in the comparison-based I/O model.  This bound
is often referred to as the {\em sorting bound}. If the comparison
constraint is replaced by the weaker indivisibility constraint, there is an
$\Omega(\min\{\frac{N}{B}\log_{M/B}\frac{N}{B},N\})$ lower bound, known as
the {\em permuting bound}. The two bounds are the same when
$\frac{N}{B}\log_{M/B}\frac{N}{B}<N$; it is conjectured that for this
parameter range, $\Omega(\frac{N}{B}\log_{M/B}\frac{N}{B})$ is still the
sorting lower bound even without the indivisibility constraint. For
$\frac{N}{B}\log_{M/B}\frac{N}{B}>N$, the current situation in the I/O
model is the same as that in the RAM model, that is, the best upper bound
is just to use the best RAM algorithm (which has $O(N\log\log N)$ time
deterministically or $O(N\sqrt{\log\log N})$ time randomized) naively in
external memory ignoring the blocking at all, and there is no non-trivial
lower bound.  When the block size is not too small, none of the RAM sorting
algorithms works better than the comparison-based one, which makes the
situation ``cleaner''.  Thus, a sorting lower bound (without any
restrictions) has been considered to be more hopeful in the I/O model (with
$B$ not too small) than in the RAM model, and it was posed as a major open
problem in~\cite{aggarwal:input}. Thus, our result provides a way to
approach a sorting lower bound via that of priority queues, while data
structure lower bounds have been considered (relatively) easier to obtain
than (concrete) algorithm lower bounds (except in restricted computation
models), as witnessed by the many recent strong cell probe lower bounds for
data structures, such as \cite{patrascu:unifying,larsen12:_cell} among many
others.  However, our result does not offer any new bounds for priority
queues because we do not know of a better sorting algorithm than the
comparison-based ones in the I/O model (and the conjecture is that they do
not exist when $\frac{N}{B}\log_{M/B}\frac{N}{B}<N$).

Since a priority queue can be used to sort $N$ keys with $N$ insertion and
$N$ deletemin operations, it follows that
$\Omega(\frac{1}{B}\log_{M/B}\frac{N}{B})$ is also a lower bound for the
amortized I/O cost per operation for any external priority queue, in the
comparison-based I/O model.  There are many priority queue constructions
that achieve this lower bound, such as the buffer tree~\cite{arge:buffer},
$M/B$-ary heaps~\cite{fadel1999heaps}, and array
heaps~\cite{brodal1998worst}.  See the survey~\cite{vitter2001external} for
more details.  However, they do not use sorting as just a black box, and
cannot be improved even if we have a faster external sorting algorithm.
Thus they do not give a priority queue-to-sorting reduction.  The extra
$O(\log_{M/B}\frac{N}{B})$ factor comes from a tree structure with fanout
$O(M/B)$ within the priority queue construction. and a key must be moved
$\Omega(\log_{M/B}\frac{N}{B})$ times to ``bubble up" or ``bubble down".

Arge et al.~\cite{arge2002cache} developed a cache-oblivious priority
queue that achieves the sorting bound with the tall cache assumption, that
is, $M$ is assumed to be of size at least $B^2$.  We note that their
structure can serve as a priority queue-to-sorting reduction in the I/O
model, by replacing the cache-oblivious sort with a sorting black box. The
resulting priority queue supports all operations in
$O(\frac{1}{B}\sum_{i\ge 0}S(N^{(2/3)^i })$ amortized I/Os if the
sorting algorithm sorts $N$ keys in $NS(N)/B$ I/Os.  However, this
reduction is not tight for $S(N)=O(\log\log\frac{N}{B})$, and there seems
to be no easy way to get rid of the tall cache assumption, even if the
algorithm has the knowledge of $M$ and $B$.
%%% Local Variables: 
%%% mode: latex
%%% TeX-master: "paper"
%%% End: 

% !TEX root = paper.tex
\section{Structure} 
In this section, we describe the structure of our priority queue.  In the
next section, we show how this structure supports various operations.
Finally we analyze the I/O costs of these operations.

The priority queue consists of multiple layers whose sizes vary from $N$ to
$cB$, where $c$ is some constant to be determined later.  The $i$'th layer
from above has size $\Theta(B\log^{(i)}\frac{N}{B})$, for $i\ge 0$, and the
priority queue has $O(\log^*N)$ layers. For the sake of simplicity we will
refer to a layer by its size.  Thus the layers from the largest to the
smallest are layer $N$, layer $B\log\frac{N}{B}$, $\ldots$, layer $cB$.
Layer $cB$ is also called {\em head}, and is stored in main memory. Given a
layer $X$, its {\em upper layer} and {\em lower layer} are layer
$B2^{\frac{X}{B}}$ and  layer $B\log\frac{X}{B}$, respectively.  We
use $\Psi_X$ to denote $B2^{\frac{X}{B}}$ and $\Phi_X$ to denote
$B\log\frac{X}{B}$.  The priority queue maintains the invariant that the
keys in layer $\Phi_X$ are smaller than the keys in layer $X$.  In
particular, the minimum key is always stored in the head and can be
accessed without I/O cost.

\begin{figure}[b!]
\centering 
\includegraphics[width=\textwidth]{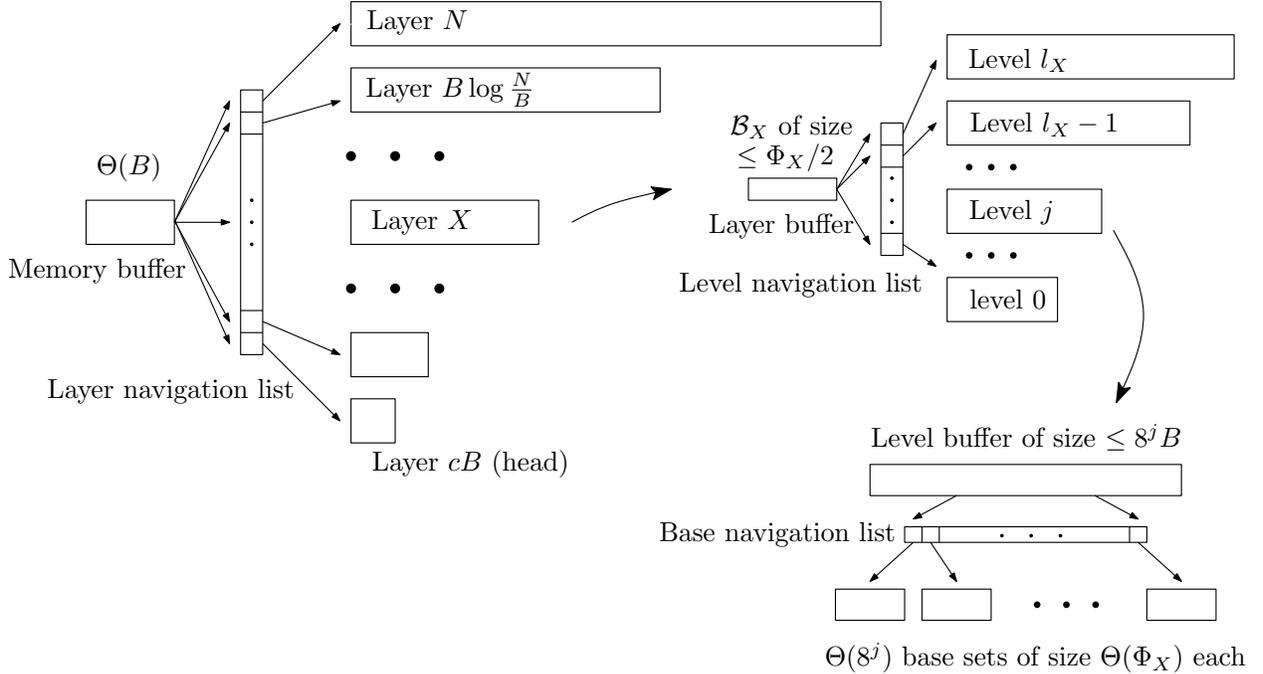}
\caption{\label{fig:layers} The components of the priority queue.}
\end{figure} 

We maintain a main memory buffer of size $O(B)$ to accommodate incoming
insertion and deletion operations. In order to  distribute keys in the memory buffer to
different layers I/O-efficiently, we maintain a structure called
 {\em layer navigation list}. Since this structure
will also be used in other components of the priority queue, we define
it in a unified way. Suppose we want to
distribute the keys in a buffer $\Bcal$ to $t$
sub-structures $S_1, S_2,\ldots S_t$. The keys in different
sub-structures are sorted relative to each other, that is, the keys in
$S_i$ are less or equal to the keys in $S_{i+1}$. Each sub-structure $S_i$
is associated with a buffer $\Bcal_i$, which accommodates
 keys transferred from $\Bcal$. 
The goal is to distribute the keys in $\Bcal$ to each   
$\Bcal_i$ I/O-efficiently, such that the keys that 
go to $\Bcal_i$ have values between the minimum keys of $S_i$ and
$S_{i+1}$. A navigation list stores a set
of $t$ {\em representatives}, each representing a sub-structure. The
representative of $S_i$, denoted $r_i$, is a triple that stores 
the minimum key of $S_i$, the number of keys stored in $\Bcal_i$, 
and a pointer to the last
non-full block of the buffer $\Bcal_i$. The
representatives are stored consecutively on the disk, and are sorted on the
minimum keys. The layer navigation list is built for the $O(\log^*
N)$ layers, so it has size $O(\log^* N)$. Please see Figure~\ref{fig:layers}.
   
Now we will describe the structures inside  a layer $X$ except layer
$cB$, which is always in the main memory. 
 First we maintain a {\em layer buffer} of size $\Phi_X/2$ to store 
keys flushed from the memory buffer. 
The main structure of layer
$X$ consists of  $O(\log \frac{X}{\Phi_X})$ levels with
exponentially increasing sizes. The $j$'th level from the bottom,
denoted level $j$, has size  $\Theta(8^j\Phi_X)$.
We also keep the invariant that the keys in level $j$ are
less or equal to the keys in level $j+1$.
We maintain a {\em level navigation list} of size $\Theta(\log\frac{X}{\Phi_X})$, 
which represents the $\log\frac{X}{\Phi_X}$ levels.
Most keys in level $j$ are stored in  $\Theta(8^j)$
disjoint {\em base sets}, each of size $\Theta(\Phi_X)$. 
The base sets, from left to right, are sorted relative to each other, but
they are not internally sorted. Other than the base sets, there is a {\em level
  buffer} of size $8^jB$, which is used to temporarily accommodate keys before
distributing them to the base set.  We also maintain a {\em base navigation
  list} of  size $\Theta(8^j)$ for the  base sets. Note
that we do not impose the level structures on layer $cB$ since it can fit in
the main memory.
The components of the priority queue are illustrated in Figure~\ref{fig:layers}.

Here we provide some intuition  for this  complicated
structure. We first divide the keys into exponentially increasing levels,
which in some sense is similar to building a heap. 
However, the $O(\log N)$-level structure implies
that a key may be moved $O(\log N)$ times in its lifetime. To
overcome this,  we
group the keys into base sets of logarithmic sizes so that we can
move more keys with the same I/O cost (we will move pointer to the
base sets rather than the base sets themselves).  
Finally, when a base set gets down to level $0$,  we need to recursively build the structure on it, which
results in the $O(\log^* N)$ layers. 
     
 %We use $\Phi_X$ to denote $B\log\frac{X}{B}$
%and $\Psi_X$ to denote $\Psi_X$. 
%The exact number of
%levels in layer $X$ is defined
%by the largest integer $l$ such that $8^{l+1}B\log\frac{X}{B} \le
%X/4$, so we can make sure that the number of keys in the top level of
%layer $X$ is less than the number of keys in a base set of its upper layer. 
%The keys in layer $X$ is larger than the keys in layer
%$B\log\frac{X}{B}$. In the same layer the keys in level $i$ is smaller
%than the keys in level $i+1$。

Let $l_X$ denote the top level of layer $X$. 
We use $\Bcal_X$ to denote the layer buffer of layer $X$
and $\Bcal_j$ to denote the level buffer of level $j$ when the layer
is specified. Our priority queue
maintains the following invariants for layer $X$: 
 
\begin{invariant}
\label{inv:buffer_size}
The layer buffer $\Bcal_{X}$ contains at most
$\frac{1}{2}\Phi_X$ keys; the level buffer $\Bcal_j$ at 
layer $X$ contains at most $8^jB$ keys. 
\end{invariant}
\begin{invariant}
\label{inv:buffer_order}
The layer buffer $\Bcal_X$ only contains keys between
the minimum keys of layer $X$ and  its upper layer.
The level buffer $\Bcal_j$ only contains keys
between the minimum keys of level $j$ and its upper level.  
\end{invariant}

\begin{invariant} 
\label{inv:level_size}
A  base set in layer $X$ has size between $\frac{1}{2}
\Phi_X$ and $2\Phi_X$;
level $j$ of layer $X$, for $j=0, 1, \ldots, l_X-1$, has size between $2\cdot 8^j \Phi_X$
and $6 \cdot 8^j \Phi_X$, and level $l_X$ has size  between $2\cdot
8^{l_X}\Phi_X$ and $40\cdot 8^{l_X}\Phi_X$.  
\end{invariant}

\begin{invariant}
\label{inv:head}
The head contains at most $2cB$ keys.
\end{invariant}

Note that when we talk about the size of a level, we only 
count the keys in its base sets and exclude the level buffer. 
The top level has a slightly different size range so
that  the construction works for any value of  $X$. 

We say a layer buffer, a level buffer, a base set, a level or the head
overflows if its size exceeds its upper bound in
Invariant~\ref{inv:buffer_size}, ~\ref{inv:level_size}
or~\ref{inv:head}; we say a base set, a level underflows
if its size gets below the lower bound in
Invariant~\ref{inv:level_size}.

%Below we define some 
%terminologies  for the ease of presentation. 
%We say a buffer {\em overflows} if its size violates
%invariant~\ref{inv:buffer_size}. A level {\em underflows} if its size violates
%the corresponding upper bound of invariant~\ref{inv:level_size}. Conversely,
%a level or a base set  {\em overflows} if its size violates
%the corresponding lower bound of invariant~\ref{inv:level_size}. We say a
%layer underflows or overflows if its top level underflows or
%overflows. 
%%% Local Variables: 
%%% mode: latex
%%% TeX-master: "paper"
%%% End: 

% !TEX root = paper.tex
\section{Operations}
Recall that the priority queue supports three operations: insertion,
deletion, and findmin. Since we always maintain the minimum key in the main
memory (it is in either the head or the memory buffer), the cost of  
a findmin operation is free. We process deletions in a lazy fashion,
that is, when a deletion comes we generate a {\em delete signal} with the
corresponding key and a time stamp, and insert the delete signal to 
the priority queue. In most cases we treat the delete signals as norm insertions. We
only perform the actually delete in the head so that the current 
minimum key is always valid. To ensure 
linear space usage we perform a global rebuild after every $N/8$ updates.
  
Our priority queue is implemented by three general operations:
{\em global rebuild}, {\em flush}, and {\em rebalance}. A global
rebuild operation sorts all keys and processes all delete signals to
maintain linear size. A flush operation
distributes all keys in a buffer to the buffers of corresponding
sub-structures to maintain Invariant~\ref{inv:buffer_size}. A rebalance operation 
moves keys between two adjacent 
sub-structures  to maintain Invariant~\ref{inv:level_size}.

\subsection{Global Rebuild}
We conduct the first global rebuild when the internal memory buffer is
full. Then, after each global rebuild, we set $N$ to be the number of
keys in the priority queue, and keep it fixed until the next global
rebuild. 
A global rebuild is triggered whenever layer $N$ (in fact, its top
level) becomes unbalanced or
the priority queue has received $N/8$ new updates since the last global rebuild.
We  show that it
takes  $O(NS(N)/B)$ I/Os to rebuild our priority queue.  
We first sort all keys in the priority queue and process the delete
signals. Then we scan through the remaining keys and divide them into base
sets of size $\Phi_N$, except the last base set
which may be smaller. This base set is merged to its predecessor  if
its size is less or equal to $\frac{1}{2}\Phi_N$.  The first
base set is used to construct the lower layers, and the rest are used to construct layer
$N$. To rebuild the $O(\log\frac{N}{\Phi_N})$ levels of layer $N$, 
we scan through the base sets, and
take the next $4\cdot 8^j$ base sets to build level $j$, for
$j=0,1,2,\ldots$. 
Note that the base
navigation list of these $4\cdot 8^j$ base sets can be constructed when
we scan through the keys in the base sets. The level rebuild process stops
when  we encounter an integer
$l_N$ such that the number of remaining base sets  is more than
$4\cdot 8^{l_N}$, but less or equal to $4\cdot( 8^{l_N}+8^{l_N+1})=36\cdot 8^{l_N}$.
Then we take these base sets to form the top level of layer $N$. 
After the global rebuild, level $j$ has size $4^j\Phi_N$, and the top
level $l_N$ has size between $(4\cdot 8^{l_N}-\frac{1}{2})\Phi_N$ and $(36\cdot
8^{l_N}+\frac{1}{2})\Phi_N$. For $X= B\log\frac{N}{B},
B\log^{(2)}\frac{N}{B},\ldots,cB $,  layer $X$ are constructed 
recursively using the same algorithm. All buffers are left empty. 

Based
on the global rebuild algorithm, the priority queue maintains the
following invariant between two global rebuilds:
\begin{invariant}
\label{inv:number_of_levels}
The top level $l_X$ in layer $X$ is determined by the
maximum $l_X$ such that 
$$1+\sum_{j=0}^{l_X}4\cdot 8^j\le \frac{X}{\Phi_X}.$$
The number of layers and the number of levels in each layer will
not change between two global rebuilds. 
\end{invariant}
As a result of Invariant~\ref{inv:number_of_levels}, we have the
following lemma:
\begin{lemma}
\label{lem:top_level}
Suppose the top level in layer $X$ is level $l_X$. 
Then $l_X$ is an integer that
satisfies the following inequality:
$$4\cdot 8^{l_X}\Phi_X \le X \le 40\cdot 8^{l_X}\Phi_X.$$
\end{lemma}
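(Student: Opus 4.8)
The plan is to extract the bound on $l_X$ directly from the inequality in Invariant~\ref{inv:number_of_levels}, namely that $l_X$ is the largest integer with $1+\sum_{j=0}^{l_X}4\cdot 8^j\le X/\Phi_X$. First I would evaluate the geometric sum: $\sum_{j=0}^{l_X}4\cdot 8^j = 4\cdot\frac{8^{l_X+1}-1}{7}=\frac{32\cdot 8^{l_X}-4}{7}$, so the defining inequality reads $1+\frac{32\cdot 8^{l_X}-4}{7}\le \frac{X}{\Phi_X}$, i.e. $\frac{32\cdot 8^{l_X}+3}{7}\le \frac{X}{\Phi_X}$. From this, the lower bound $4\cdot 8^{l_X}\Phi_X\le X$ is immediate, since $4\cdot 8^{l_X}=\frac{28\cdot 8^{l_X}}{7}\le\frac{32\cdot 8^{l_X}+3}{7}\le\frac{X}{\Phi_X}$.

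For the upper bound I would use the maximality of $l_X$: the inequality must fail for $l_X+1$, so $1+\sum_{j=0}^{l_X+1}4\cdot 8^j > \frac{X}{\Phi_X}$, i.e. $\frac{32\cdot 8^{l_X+1}+3}{7} > \frac{X}{\Phi_X}$, which gives $\frac{X}{\Phi_X} < \frac{256\cdot 8^{l_X}+3}{7}$. It then remains to check that $\frac{256\cdot 8^{l_X}+3}{7}\le 40\cdot 8^{l_X}$; since $40\cdot 8^{l_X}=\frac{280\cdot 8^{l_X}}{7}$ and $256\cdot 8^{l_X}+3\le 280\cdot 8^{l_X}$ for all $l_X\ge 0$ (indeed $3\le 24\cdot 8^{l_X}$), this holds, yielding $X\le 40\cdot 8^{l_X}\Phi_X$. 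That $l_X$ is an integer is part of Invariant~\ref{inv:number_of_levels} by construction.

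One subtlety I would address: the definition presupposes that such an $l_X\ge 0$ exists, i.e. that $X/\Phi_X$ is large enough that even $l_X=0$ satisfies $1+4\cdot 8^0=5\le X/\Phi_X$; for layers above the head this follows because $X/\Phi_X = \frac{X}{B\log(X/B)}$ is comfortably larger than $5$ once $X\ge cB$ for a suitable constant $c$ (and layer $cB$ itself carries no level structure, as noted in the text). The only ``obstacle'' here is really just bookkeeping — verifying that the slack factors $4$ and $40$ in the statement correctly absorb the $+1$, $+3$ additive terms and the ratio $256/32$ in the geometric series — so no step is genuinely hard; the lemma is a routine consequence of Invariant~\ref{inv:number_of_levels}.
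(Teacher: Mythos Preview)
Your proposal is correct and follows exactly the route the paper intends: the paper states the lemma simply ``as a result of Invariant~\ref{inv:number_of_levels}'' without spelling out any computation, and your argument is precisely the geometric-series bookkeeping that unpacks that invariant. There is nothing to add or fix.
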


\subsection{Flush}
We define the  flush operation in a unified way.
Suppose  we have a buffer $\Bcal$ and $k$
sub-structures $S_1, S_2,\ldots, S_k$. Each $S_i$ is  associated with a
buffer $\Bcal_i$, and a navigation list  $L$ of size $k$ is maintained
for the $k$ sub-structures. To flush the buffer $\Bcal$ we first sort
the  keys in it. Then we scan through the navigation list, 
and for each representative $r_i$ in $L$, we read the last non-full
block of $\Bcal_i$ to the memory, and fill it with keys in
$\Bcal$. When the block is full, we write it back to disk, and
allocate a new block. We do so
until we encounter a key
that is larger than the  key  in $r_{i+1}$. 
Then we update $r_i$, and advance to $r_{i+1}$. The I/O cost for a flush  
is the cost of sorting a buffer of
size $\abs{\Bcal}$ plus one I/O for each
sub-structure, so we have the following lemma:
\begin{lemma} 
\label{lem:flush}
The I/O cost for flushing keys in buffer $\Bcal$ to $k$ sub-structures
is bounded by
$O(\frac{\abs{\Bcal}S(\abs{\Bcal})}{B}+k)$.
\end{lemma}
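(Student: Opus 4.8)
The plan is to decompose the cost of a single flush into three pieces and bound each separately: (i) sorting the keys of $\Bcal$, (ii) reading (and updating) the navigation list $L$, and (iii) the block I/Os performed on the individual buffers $\Bcal_i$.

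For (i), the hypothesis of the paper supplies a black-box sorting routine, so the keys of $\Bcal$ can be sorted in $O(\abs{\Bcal}S(\abs{\Bcal})/B)$ I/Os; this is exactly the first term of the claimed bound. For (ii), the representatives of $L$ are laid out consecutively on disk, so one sequential scan of $L$ — together with reading the sorted keys of $\Bcal$ — costs $O(\abs{\Bcal}/B + k/B + 1)$ I/Os, which is dominated by the other two contributions. The in-place update of each touched representative $r_i$ happens during this same scan and adds nothing.

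For (iii), the heart of the argument is a charging scheme. We walk through the representatives $r_1, r_2, \ldots, r_k$ in order while simultaneously walking through the sorted keys of $\Bcal$. For each $S_i$ that actually receives keys, we read its current last non-full block once, and then write out a sequence of blocks as we fill them; every block we write is completely full except possibly the last one, which we leave as the new last non-full block of $\Bcal_i$ and record in $r_i$. If $\Bcal_i$ held $b_i < B$ keys in its last block before the flush and receives $m_i \ge 1$ keys during it, it incurs one read and at most $\lceil (b_i+m_i)/B\rceil \le m_i/B + 1$ writes. Summing over all $i$ and using $\sum_i m_i \le \abs{\Bcal}$, the total I/O at the buffers is $O(\sum_i (m_i/B + 1)) = O(\abs{\Bcal}/B + k)$. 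Adding the three contributions gives $O(\abs{\Bcal}S(\abs{\Bcal})/B + k)$, as claimed.

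Since the statement is essentially a bookkeeping lemma, there is no genuine obstacle; the only point needing a little care is in (iii) — ensuring that the "boundary" partially-full block of each $\Bcal_i$ is counted exactly once and charged to the $+k$ term, while every other block written is genuinely full, holds $B$ distinct keys taken from $\Bcal$, and is therefore charged against the $\abs{\Bcal}/B$ term. One should also note that keys never move backward: because the $S_i$ are sorted relative to each other and the keys of $\Bcal$ are processed in sorted order, each key of $\Bcal$ is placed into exactly one $\Bcal_i$, which is what makes the $\sum_i m_i \le \abs{\Bcal}$ accounting valid.
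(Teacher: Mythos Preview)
Your proof is correct and follows essentially the same approach as the paper: sort $\Bcal$, scan the navigation list, and for each sub-structure read/write its boundary block while charging full blocks to $\abs{\Bcal}/B$. The paper's own argument is the single sentence preceding the lemma (``the I/O cost for a flush is the cost of sorting a buffer of size $\abs{\Bcal}$ plus one I/O for each sub-structure''); your write-up simply makes the charging in part (iii) explicit, which is a welcome elaboration but not a different method.
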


There are three individual flush operations. A {\em memory flush}
distributes keys in the internal memory buffer to $O(\log^* N)$ layer buffers; a 
{\em layer  flush} on layer $X$ distributes keys in the layer buffer
to $O(\log\frac{X}{\Phi})$ level
buffers in the layer; and a {\em level flush} on level $j$ at layer $X$
distributes  keys in the level buffer to $\Theta(8^j)$ base sets in
the level.

\subsection{Rebalance} 
\paragraph{Rebalancing the base sets.}
Base rebalance is performed only after a level flush, since this is the
only operation that causes a base set to be unbalanced.  Consider a level
flush in level $j$ of layer $X$.  Suppose the base set $A$ overflows after
the flush. To rebalance $A$ we sort and scan through the keys in it, and
split it into base sets of size $\Phi_X$.  If the last base set has less
than $\frac{1}{2}\Phi_X$ keys we merge it into its predecessor.  Note that
any base set coming out of a split has between $\frac{3}{2}\Phi_X$ and
$\frac{1}{2}\Phi_X$ keys, so it takes at least $\frac{1}{2}\Phi_X$ new
updates to any of them before it initiates a new split. Note that after the
split we should update the representatives in the base navigation list.
This can be done without additional I/Os to the level flush operation: We
store all new representatives in a temporary list and rebuild the
navigation list after all overflowed base sets are rebalanced in level $j$.
A base set never underflows so we do not have a join operation.

\paragraph{Rebalancing the levels.} 
We define two level rebalance operations: {\em level push}
and {\em level pull}. Consider level $j$ at layer $X$. 
When the number of keys in level $j$ (except the top level)
gets to more than $6\cdot 8^j\Phi_X$,
a level push operation is performed to
 move some of its base sets to the upper level. More
precisely, we scan through the navigation list of level $j$ to find the first
representative  $r_k$ such that the number of keys before $r_k$ is
larger than $4\cdot 8^j\Phi_X$. Then we split the navigation list
of level $j$ around $r_k$ and attach the second half to the navigation list
of level $i+1$. Note that by moving the representatives we also move
their corresponding base sets to level $j+1$. By Invariant~\ref{inv:level_size}, 
the number of keys in a
base set is at most $2\Phi_X$, so the new level
$j$ has size  between $4\cdot 8^j\Phi_X$ and $(4\cdot
8^j+2)\Phi_X$.
Finally, to maintain Invariant~\ref{inv:buffer_order} we sort
level buffer $\Bcal_j$
and move keys larger than the $r_k$ to the level buffer
$\Bcal_{j+1}$. 

 Conversely, 
if the number of keys in level $j$ gets below $2\cdot 8^j\Phi_X$
(except the top level), a level
pull operation is performed. We cut
a proportion of the navigation list of level $j+1$ and attach it to the
navigation list of level
$j$, such that the number of keys in level $i$
becomes between $4\cdot 8^j\Phi_X$ and $(4\cdot
8^j-2)\Phi_X$. 
We also sort  $\Bcal_{j+1}$, the buffer
of level $j+1$, and move the corresponding keys  to level buffer $\Bcal_j$.

Observe that  after a
level push/pull, the number
of keys in  level $j$ is between $(4\cdot
8^j-2)\Phi_X$ and $(4\cdot 8^j+2)\Phi_X$, so it takes
at least $\Omega(8^j\Phi_X)$ new updates before the level
needs to be rebalanced again. 
The main reason that we adopt this level rebalance strategy is that
it does not touch all keys in the level; the rebalance only
takes place on the base navigation lists and the keys in the level buffers. 

 %If the
%buffer overflows we flush it. Note that a
%rebalance operation of level $j$ may cause level $j+1$ to be
%unbalanced
 % Note that the merge operation may cause the buffer of level $j+1$ to overflow, in which
%case we flush its level buffer after the rebalance
%operation.

\paragraph{Rebalancing the layers.}
When  the top level $l_X$ of  layer $X$ becomes
unbalanced, we can no longer rebalance it only using navigation list.
Recall that  its upper  level is level
$0$ in layer $\Psi_X$. For
simplicity we will refer to the the two levels as level $l_X$ and level
$0$, without specifying their layers. 
We also define two operations for rebalancing a layer: 
{\em layer push} and {\em layer  pull}. A layer push is performed when
the layer overflows, that is, 
the number of keys in level $l_X$ gets more than $40\cdot
8^{l_X}\Phi_X$. In this case   
we sort all keys in level $l_X$ and level $0$ together, then 
use the first $4\cdot 8^{l_X}\Phi_X$
keys to rebuild level $l_X$ and the rest to rebuild level $0$. Recall
that to rebuild a level we  scan through the keys and divide them into
base sets of size $\Phi_X$, except the last one which has size between
$\frac{1}{2}\Phi_X$ and $\frac{3}{2}\Phi_X$, and then 
we scan through the keys again to 
build the base navigation list. 
Note that the rebuild operation will change the minimum key
in layer $\Psi_X$, so we update the layer navigation list
accordingly. Finally we 
sort the keys in the layer buffer $\Bcal_X$ and the level buffer $\Bcal_{l_X}$,
and move the keys larger than the new minimum key of layer
$\Psi_X$ to the level buffer $\Bcal_0$. 

A layer pull operation is performed when the layer underflows, that
is, there are less than $2\cdot
8^{l_X}\Phi_X$ keys in level $l_X$. A layer pull proceeds in the same
way as a layer push does, except for the last step. Here we sort the layer
buffer $\Bcal_{\Psi_X}$ and the level buffer $\Bcal_{0}$ and
move the keys smaller than the new minimum key to the level buffer
$\Bcal_{l_X}$. After a layer push or pull, the number of keys in level
$l_X$ is $4\cdot 8^j\Phi_X$. 
By lemma~\ref{lem:top_level}, we have  $40\cdot 8^{l_X}\Phi_X \ge X$, so
it takes at least
$2\cdot 8^{l_X}\Phi_X=\Omega(X)$ new updates to layer $X$ before we
initiate a new push or a pull again.

Note that since we do not impose the level structure on the head layer $cB$, we
need to design the layer push and layer pull operations specifically for it. 
A layer push is performed when the number of keys in the head gets to
more than $2cB$. We sort
 all keys in it and level $0$ of layer $\Psi_{cB}$, and 
use the first $cB$ keys to rebuild the head and the
rest to rebuild level $0$.  A layer pull is performed when the head
becomes empty. The operation processes  in the
same way as a layer push does, except that after rebuilding both
levels, we sort the layer
buffer $\Bcal_{\Psi_{cB}}$ and the level buffer $\Bcal_0$ together,  and move the
keys smaller than the new minimum key of layer $\Bcal_{\Psi_{cB}}$ to
the head.

%A layer push operation is performed when the number of keys in 
%layer of $X$ exceeds $X$. In this case we sort all keys in  layer $X$,
%take the first $X/2$ keys to rebuild layer $X$, and push the rest keys
%to the layer buffer of the higher level.  A layer pull operation is
%performed when the number of keys in layer $X$ is less than $X/4$.

\subsection{Scheduling Flush and Rebalance Operations }
In order to achieve the I/O bounds in
Theorem~\ref{thm:priority_queue},
 we need to schedule the operations delicately. Whenever the memory
 buffer overflows we start to update the priority queue.
 This process is divided into three stages: the flush stage, 
the push stage, and the pull stage. In the flush stage we flush all
overflowed buffers and rebalance all unbalanced base sets; in the
push stage  we 
use push operations to rebalance all overflowed layers and levels. 
We treat delete signals as
insertions in the flush stage and the push stage.
In the pull stage we deal with delete signals and use pull
operations to rebalance all underflowed layers and levels.

In the flush stage, we initialize a queue $Q_o$ to keep track of all 
overflowed buffers and a
 doubly linked list $L_o$ to keep track of all overflowed levels. The buffers
 are flushed in a BFS fashion. First we flush the memory buffer
 into $O(\log^* N)$ layer buffers. After flushing the memory buffer,
we insert the representatives of  the overflowed layer buffers
into $Q_o$, from bottom to top.
 We also check whether the head overflows after the memory
flush. If so, we insert its representatives to the beginning of $L_o$.
Then we start to flush the layer buffers in $Q_o$.
 Again, when flushing a layer buffer we insert the representatives of the
 overflowed level buffers to $Q_o$ from bottom to top. After all layer
 buffers are flushed, 
 we begin to flush level buffers in $Q_o$. After each level flush, we
 rebalance all unbalanced base sets in this level, and if the level
 overflows we add the
 representative of this level to the end of $L_o$. Note that the
 representatives in $L_o$ are sorted on the minimum keys of the levels.

 After all overflowed level buffers are flushed, we enter the push
 stage and start to rebalance levels
in $L_o$  in a bottom-up fashion. In each step, we take out the first
level in $L_o$ (which is also the current lowest overflowed level) and
rebalance it. Suppose this level is level $j$ of
layer $X$.  If it is not the top level or the head layer we perform a level push;
otherwise we perform a layer push. Then we delete the representative
of this level from $L_o$.
 A level push may cause the level
buffer of level  $j+1$ to be overflowed, in which case we flush it and
rebalance the overflowed base sets. Then we check whether level $j+1$
overflows. If so, we insert the representative of level $j+1$ to
the head of  $L_o$ (unless it is already at the beginning of $L_o$) and
perform a level push on level
$j+1$. Otherwise we take out a new level in $L_o$ and continue the process.
When the top level of layer $N$ become unbalanced we simply perform a
global rebuild. 

After rebalancing all levels, we enter the pull stage and start to
process the delete signals. This is done as follows. 
We first process all delete signals in the head. If the head
becomes empty we perform a layer pull to get more keys into
the head. This may cause higher levels or layers to underflow, and we
keep performing level pulls and layer pulls until all levels and
layers are balanced. Consider a level pull or layer pull on level $j$
of layer $X$. After the level pull or layer pull the level buffer
$\Bcal_j$ may overflow. If so, we flush it and rebalance the base sets
when necessary. Note that this may cause the size of level $j$ to
grow, but it will not overflow, as we will show later, so that we do
not need push operations in the pull stage.   After all levels and
layers are balanced,  we process the delete signals in the head
again. We repeat the pull process until 
there are no delete signals left in the head and the head is non-empty.

\subsection{Correctness} 
It should be obvious that the flush and the push stage will always
succeed. The following two lemmas guarantee that the pull stage will
also succeed.
\begin{lemma}
\label{lem:supply}
When we perform a level pull on level $j$, there are
enough keys in level $j+1$ to rebalance level $j$; 
When we perform a layer pull on layer $X$, there are  enough keys
in level $0$ of layer $\Psi_X$ to rebalance level $l_X$.
\end{lemma}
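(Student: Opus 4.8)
The plan is to prove the two claims separately, in each case by establishing a quantitative gap between the threshold that triggers a pull and the amount of surplus guaranteed in the supplying structure, and then arguing that this surplus has not been exhausted since the last time the supplier was rebalanced. First I would treat the level pull. A level pull on level $j$ is triggered when level $j$ drops below $2\cdot 8^j\Phi_X$; after the pull we want level $j$ to reach roughly $4\cdot 8^j\Phi_X$, so we need to extract about $2\cdot 8^j\Phi_X$ keys from level $j+1$. By Invariant~\ref{inv:level_size}, level $j+1$ has size at least $2\cdot 8^{j+1}\Phi_X = 16\cdot 8^j\Phi_X$ (or at least $2\cdot 8^{l_X}\Phi_X$ if $j+1$ is the top level, which by Lemma~\ref{lem:top_level} is $\Omega(X)$ and hence also at least $2\cdot 8^{j+1}\Phi_X/$constant), so there is ample material — the surplus of level $j+1$ over its own pull threshold comfortably dominates the $2\cdot 8^j\Phi_X$ keys we need to remove. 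The one subtlety is that removing keys from level $j+1$ may itself push level $j+1$ below \emph{its} threshold $2\cdot 8^{j+1}\Phi_X$; I would handle this by noting that the scheduling processes pulls bottom-up, so an underflow cascades upward and is resolved recursively, and the induction is well-founded because there are only $O(\log\frac{X}{\Phi_X})$ levels and finitely many layers.

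Next I would handle the layer pull. A layer pull on layer $X$ is triggered when level $l_X$ (the top level of layer $X$) underflows, i.e. has fewer than $2\cdot 8^{l_X}\Phi_X$ keys; after the pull level $l_X$ should reach $4\cdot 8^{l_X}\Phi_X$, so we must draw about $2\cdot 8^{l_X}\Phi_X$ keys from level $0$ of layer $\Psi_X$. Here the key point is a \emph{size comparison across layers}: level $0$ of layer $\Psi_X$ has size $\Theta(\Phi_{\Psi_X})=\Theta(X)$, since $\Phi_{\Psi_X}=B\log\frac{\Psi_X}{B}=B\cdot\frac{X}{B}=X$ by the definitions $\Psi_X=B2^{X/B}$ and $\Phi_Y=B\log\frac{Y}{B}$. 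Meanwhile by Lemma~\ref{lem:top_level}, $8^{l_X}\Phi_X\le X/4$, so the demand $2\cdot 8^{l_X}\Phi_X\le X/2$, whereas the supply in level $0$ of layer $\Psi_X$ is at least $2\Phi_{\Psi_X}=2X$ (using the lower bound in Invariant~\ref{inv:level_size} for a non-top level, or the top-level lower bound if $\Psi_X$'s level $0$ happens to be its top level, which can only make it larger). Thus the supply exceeds the demand by a constant factor and the layer pull succeeds. As with levels, I would note that draining level $0$ of layer $\Psi_X$ may cause layer $\Psi_X$ to underflow, triggering a further layer pull up the chain; the recursion terminates because there are only $O(\log^* N)$ layers.

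The main obstacle I anticipate is not the arithmetic of the size bounds — those follow directly from Invariant~\ref{inv:level_size}, Lemma~\ref{lem:top_level}, and the identity $\Phi_{\Psi_X}=X$ — but rather verifying that the \emph{invariants still hold at the moment the pull is invoked}, given the interleaving of flush, push, and pull operations described in the scheduling section. In particular, during the pull stage a level buffer may overflow and be flushed, which can grow a level; I need to confirm that such growth never lifts a level above its overflow threshold (so that no push is needed mid-pull-stage, consistent with the paper's claim), and conversely that the supplier level invoked by a pull has not already been emptied earlier in the same pull cascade below the bound I am relying on. I would address this by arguing inductively on the order in which the scheduler touches levels/layers in the pull stage: when a pull on level $j$ (resp.\ layer $X$) is executed, level $j+1$ (resp.\ level $0$ of $\Psi_X$) has either not been touched yet in this stage — so Invariant~\ref{inv:level_size} still holds for it — or was just rebalanced to exactly $4\cdot 8^{j+1}\Phi_X$ (resp.\ $4\cdot 8^0\Phi_{\Psi_X}=4\Phi_{\Psi_X}=4X$), which is even more favorable than the invariant lower bound. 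Combining this case analysis with the size comparisons above gives the lemma.
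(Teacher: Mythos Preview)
Your proposal is correct and follows essentially the same argument as the paper: use Invariant~\ref{inv:level_size} together with the bottom-up order of pulls to conclude that the supplying level is still balanced at the moment of the pull, then compare the guaranteed supply ($2\cdot 8^{j+1}\Phi_X$ for a level pull, $2\Phi_{\Psi_X}=2X$ for a layer pull) against the demand ($\le 4\cdot 8^j\Phi_X$, resp.\ $\le 4\cdot 8^{l_X}\Phi_X\le X$ via Lemma~\ref{lem:top_level}). The paper's proof is terser---it does not spell out your case analysis on whether the supplier was already touched during the pull stage---and it also treats the head layer explicitly, which you omitted but which follows by the same counting.
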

\begin{proof}
Recall that a level pull on level $j$ transfers at most $4\cdot 8^j\Phi_X$ keys
from level $j+1$ to level $j$. 
Since we always perform pull operations in a
bottom-up fashion in the pull stage, and 
 all levels and layers are balanced before the pull stage, it follows that
level $j+1$ is always balanced when performing a pull operation on
level $j$. This implies that level $j+1$ has at least $2\cdot
8^{j+1}\Phi_X$ keys when performing a pull operation on level $j$, which
is sufficient to supply the level pull operation.

For a layer pull on layer $X$ other than the head, recall that the operation
transfers at most $4\cdot 8^{l_X}\Phi_X$ keys
from level $0$ of layer $\Psi_X$ to level $l_X$. By similar argument
we know level $0$ is balanced, so it has at least $2X$ keys. Following
Lemma~\ref{lem:top_level}, we have $2X\ge 8\cdot 8^{l_X}\Phi_X$, so
it suffices to supply the layer pull operation. The same argument also
works for a layer
pull on the head, since it acquires at most $cB$ keys from the upper
level, and the level contains at least $2cB$ keys. 
\end{proof}

\begin{lemma}
\label{lem:stage}
A level or a layer never overflows in the pull stage. 
\end{lemma}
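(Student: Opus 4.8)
The plan is to track, for each level and layer, how many keys can be added to it during the pull stage and show this quantity stays below the overflow threshold. The only way a level or layer gains keys in the pull stage is through pull operations (which move keys \emph{down}) and the level flushes triggered by those pulls (which move keys from a level buffer into the level's base sets). Crucially, pulls are performed bottom-up and start only \emph{after} the push stage, so when the pull stage begins every level and layer is balanced: level $j$ (non-top) has at most $6\cdot 8^j\Phi_X$ keys, level $l_X$ has at most $40\cdot 8^{l_X}\Phi_X$ keys, and every level buffer $\Bcal_j$ holds at most $8^jB$ keys (Invariants~\ref{inv:buffer_size} and~\ref{inv:level_size}).

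First I would analyze a single level pull on level $j$ of layer $X$. By the definition of level pull it leaves level $j$ with between $(4\cdot 8^j-2)\Phi_X$ and $(4\cdot 8^j+2)\Phi_X$ keys — well within the balanced range — and then moves at most $\abs{\Bcal_{j+1}}\le 8^{j+1}B$ keys into $\Bcal_j$. Since $\Bcal_j$ already held at most $8^jB$ keys, afterwards $\abs{\Bcal_j}\le 8^jB + 8^{j+1}B = 9\cdot 8^jB$. When this overflows the $8^jB$ bound we flush $\Bcal_j$ into level $j$; the flushed keys number at most $9\cdot 8^jB = O(8^j B)$, which is $o(8^j\Phi_X)$ because $\Phi_X = B\log(X/B) \ge B\log c = \omega(1)$ for $c$ chosen large enough (indeed $\Phi_X \ge$ any desired constant times $B$). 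Hence after the flush level $j$ has at most $(4\cdot 8^j+2)\Phi_X + O(8^j B) < 6\cdot 8^j\Phi_X$ keys, so it does not overflow — and in particular it does not itself trigger a push, which is exactly what the scheduling section claims. The same computation with $l_X$ in place of $j$, using the weaker ceiling $40\cdot 8^{l_X}\Phi_X$, handles the top level, and the head case is immediate since a layer pull brings at most $cB$ keys into a head that was emptied.

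The remaining subtlety — and the step I expect to be the main obstacle — is that a pull can cascade: a level pull on $j$ depletes level $j+1$, possibly forcing a pull on $j+1$, and so on up the layers, so I must argue that no level is ever pulled from \emph{more than once} per pull stage before being restored to balance, and that the downward flow of keys never re-inflates a lower level that was already processed. Here I would use the bottom-up order of the pull stage together with the post-pull lower bound $(4\cdot 8^j-2)\Phi_X = \Omega(8^j\Phi_X)$: once level $j$ is balanced by a pull it has $\Omega(8^j\Phi_X)$ keys to spare, and the only subsequent additions to it are from the at-most-one level flush analyzed above, so it stays balanced for the rest of the stage. For the interaction between adjacent levels one checks that the keys a pull on level $j$ removes from level $j+1$ are at most $4\cdot 8^j\Phi_X = \frac18\cdot 4\cdot 8^{j+1}\Phi_X$, a small fraction of level $j+1$'s balanced content, so at most one corrective pull on level $j+1$ is ever needed, and an easy induction up the $O(\log^* N)$ layers closes the argument; the global rebuild safety-net at layer $N$ handles the boundary.
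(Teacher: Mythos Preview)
Your treatment of the ordinary level-pull case is essentially the paper's argument: after the pull, level $j$ sits near $4\cdot 8^j\Phi_X$, the buffer $\Bcal_j$ holds at most $8^jB+8^{j+1}B=9\cdot 8^jB$ keys, and since $\Phi_X=B\log(X/B)\ge cB$ this is absorbed well below the $6\cdot 8^j\Phi_X$ ceiling once $c$ is large enough. So far so good.

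The genuine gap is the sentence ``the same computation with $l_X$ in place of $j$\ldots\ handles the top level.'' It does not. When level $l_X$ underflows we perform a \emph{layer} pull, not a level pull, and the buffers feeding $\Bcal_{l_X}$ are completely different: we sort the layer buffer $\Bcal_{\Psi_X}$ together with the level buffer $\Bcal_0$ of layer $\Psi_X$ and move the small keys down into $\Bcal_{l_X}$. By Invariant~\ref{inv:buffer_size} the layer buffer $\Bcal_{\Psi_X}$ may hold up to $\tfrac12\Phi_{\Psi_X}=X/2$ keys, which is enormously larger than the $O(8^{l_X}B)$ you are implicitly assuming. The flush into level $l_X$ can therefore dump as many as $X/2+8B+8^{l_X}B$ keys, and you must invoke Lemma~\ref{lem:top_level} (namely $X\le 40\cdot 8^{l_X}\Phi_X$) to see that this still fits in the $36\cdot 8^{l_X}\Phi_X$ of slack left after the pull resets level $l_X$ to $4\cdot 8^{l_X}\Phi_X$. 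That is the non-trivial inequality in the paper's proof, and your proposal skips it entirely. The head case is similar and not ``immediate'': the rebuild already puts $cB$ keys in the head, and then up to $cB/2+8B$ more arrive from $\Bcal_{\Psi_{cB}}$ and $\Bcal_0$; one needs $c>16$ to keep this below $2cB$.

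Your final paragraph on cascading is largely unnecessary for this lemma. The paper's proof is purely local: it shows that the single level touched by each pull-then-flush stays balanced, and leaves the supply side (that the donor level has enough keys) to Lemma~\ref{lem:supply}. You do not need an inductive argument up the layers here.
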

\begin{proof}
Consider a level pull on level $j$ of layer $X$.  
Recall that since we move some keys from $\Bcal_{j+1}$ to $\Bcal_j$,
it is possible that $\Bcal_j$ overflows and  we need to perform a
level flush on level $j$.  We claim that after this level flush,  level $j$ is still balanced. 
For a proof, observe that level $j$ has size between
 $4\cdot 8^j\Phi_X$ and $(4\cdot
8^j-2)\Phi_X$ after the level pull, so it takes at least
$2\cdot 8^jB\log\frac{X}{B}\ge 2\cdot 8^jcB$ new updates before level
$j$ overflows. 
Since the level pull transfers at most $8^{j+1}B$ keys from
$\Bcal_{j+1}$, after the level pull, $\Bcal_j$ has less or
equal to $8^jB+8^{j+1}B=9\cdot 8^jB$ keys. Setting $c\ge 5$ allows
level $j$ to be still balanced after the level flush. This proves that 
a level never overflows in the pull stage.

Now consider a layer pull on layer $X$ other than the head. 
Recall that we move some keys
from the layer buffer $\Bcal_{\Psi_X}$ and level buffer $\Bcal_0$ to
$\Bcal_{l_X}$, it is possible that $\Bcal_{l_X}$ overflows and we need to
perform a level flush on the new level $l_X$. We claim that after this
level flush, level $l_X$ is still balanced.
For a proof, observe that after the layer pull, it takes at least $36\cdot 8^{l_X}B\log
\frac{X}{B}$ new updates before level $l_X$ overflows.
Since the layer pull transfers at most $X/2$ keys from the layer
buffer $\Bcal_{\Psi_X}$ and at most $8B$ keys from the  level
buffer $\Bcal_0$ to the level buffer of level $l_X$,   the level flush
operation flushes  at most $X/2+8B+8^{l_X}B$ keys to level $l_X$. 
By Lemma~\ref{lem:top_level}  we have 
\begin{eqnarray*}
36\cdot 8^{l_X}\Phi_X&=&20\cdot 8^{l_X}\Phi_X +16\cdot 8^{l_X}\Phi_X\\
&\ge& X/2+16\cdot 8^{l_X}B\log\frac{X}{B} \\
&\ge& X/2+8B+8^{l_X}B.
\end{eqnarray*} 
So level $l_X$ is still balanced after the level
flush. Finally, consider a layer pull on the head. Recall that it takes at
least $cB$ new update to the head before it overflows. Since the head
 acquires at most $cB/2$
keys from the layer buffer $\Bcal_{\Psi_{cB}}$, and at most $8B$ keys
from the level buffer $\Bcal_{0}$,  we can set $c > 16$ such that
$cB>cB/2+8B$, so the
head will remain balanced after the layer pull.  
This proves that a layer never overflows in the pull stage.
\end{proof}
%%% Local Variables: 
%%% mode: latex
%%% TeX-master: "paper"
%%% End: 

% !TEX root = paper.tex

\section{Analysis of Amortized I/O Complexity}
We analyze the amortized I/O cost for each operation during $N/8$
updates. We will show that the amortized I/O cost per update is
bounded by $O(\frac{1}{B}\sum_{i=0}S(B\log^{(i)}\frac{N}{B}))$, and
Theorem~\ref{thm:priority_queue} will follow.

\paragraph{Global rebuild.}
Recall that the I/O cost for a global rebuild is $O(NS(N)/B)$ I/Os. 
We claim that during
$N/8$ updates only a constant number of global rebuilds are needed, so
the amortized I/O cost per update is bounded by $O(S(N)/B)$. This can
be verified by the fact that a global
rebuild can only be triggered by $N/8$ new updates or that the level
$l_N$ becomes unbalanced,    and after a global rebuild
 it takes $\Omega(N)$ updates before level $l_N$ becomes
 unbalanced again.

\paragraph{Flush.}
We analyze the I/O cost of three different flush operations. For
memory flush, we sort a set of $B$
keys in the memory and merge them with a navigation list of size $O(\log^*
N)$. By Lemma~\ref{lem:flush}, the I/O cost is $O(\log ^* N)$. Therefore we  charge $O(\frac{\log^* N}{B})$ I/Os
for each of the updates in the memory buffer. Now consider a layer flush
at layer $X$. Let $\abs{\Bcal_X}$ denote the number of updates in the layer
buffer. By Invariant~\ref{inv:buffer_size} the layer flush operation is performed only if 
$\abs{\Bcal_X}\ge B\log\frac{X}{B}$.  
There are $O(\log\frac{X}{\Phi_X})$ level buffers, so by Lemma~\ref{lem:flush}, the I/O cost is
\begin{eqnarray*}
O\left(\frac{\abs{\Bcal_X}S(\abs{\Bcal_X})}{B}+\log\frac{X}{\Phi_X}\right)
&=&O\left(\frac{\abs{\Bcal_X}S(\abs{\Bcal_X})}{B}+\frac{\abs{\Bcal_X}}{B}\right)\\
&=&O\left(\frac{\abs{\Bcal_X}S(\abs{\Bcal_X})}{B}\right)\\
&=&O\left(\frac{\abs{\Bcal_X}S(N)}{B}\right).
\end{eqnarray*}
Thus, we can charge $O(S(N)/B)$ I/Os for each of
the $\Bcal$ updates in the layer buffer. 

Next, consider a level flush at
level $j$ in layer $X$. Let
$\abs{\Bcal_j}$ denote the number of updates in the buffer when we
perform the flush operation, and by Invariant~\ref{inv:buffer_size}
we have $\abs{\Bcal_j}\ge 8^jB$.
Recall that the size of the navigation
list is $\Theta(8^j)$, so by Lemma~\ref{lem:flush}, the I/O cost is
\begin{eqnarray*}
O\left(\frac{|\Bcal_j|S(|\Bcal_j|)}{B}+8^j \right)&=&O
\left(\frac{|\Bcal_j|S(|\Bcal_j|)}{B}+\frac{\abs{\Bcal_j}}{B}\right)
\\
&=& O\left(\frac{|\Bcal_j|S(|\Bcal_j|)}{B}\right)
=O\left(\frac{|\Bcal_j|S(N)}{B}\right). 
\end{eqnarray*}
Therefore we can charge $O(S(N)/B)$ I/Os for
each of the $\abs{\Bcal_j}$ updates in the level buffer. 

\paragraph{Rebalancing the base sets.}
Consider a rebalance operation for a base set $A$ at layer $X$.  When
$A$ overflows we sort  and divide it into equal segments. So the I/O cost
for a base set rebalance can be bounded by the sorting time of
$O(\abs{A}+\Phi_X)$ updates. Note that there are at least
$\Omega(\abs{A})$ updates to $A$ since the last rebalance
operation on it, and by Invariant~\ref{inv:level_size} 
we have $\abs{A}\ge 2\Phi_X$. Thus, the
amortized I/O cost per update is $O(S(N)/B) $.  
  
\paragraph{Rebalancing the levels.} We first consider a level push operation
  on level $j$ of layer $X$. The operation cuts the base navigation
  list of level $j$, takes the first half
  to form a new level $j$, and attaches the rest to  level $j+1$. The
  I/O cost for this cut-attach procedure is 
 $\Theta(8^j/B+1)$,  since the navigation list is sorted
  and stored consecutively on disk.
 Then the operation sorts and redistributes the level buffer
 $\Bcal_j$. Recall that we always flush the level
 buffer before rebalancing the level, so we have $\abs{\Bcal} \le 8^jB$
 when the level push is performed.  The I/O cost  for sorting and
 redistributing $\Bcal_j$ is bounded by
  $O(8^jB\cdot S(8^jB)/B)=O(8^jS(X))$. Note that after a level push,
  it takes at least $\Theta(8^j\Phi_X)$ new updates to
  level $j$ before it overflows again. So during $N/8$ updates 
at most $O(N/ (8^j\Phi_X))$ level push operations are performed on level
  $j$. It follows that 
  the I/O cost of all level push operations on level
  $j$ is bounded by 
$$O\left(8^jS(X) \cdot \frac{N}{8^j\Phi_X}\right)=
O\left(\frac{NS(X)}{B\log\frac{X}{B}}\right).$$ 
We charge $O(S(X)/B\log\frac{X}{B})$ for
  each update and for each level in layer $X$.
Since there are $O(\log\frac{X}{B})$ levels in layer $X$, 
 we charge  $O(S(X)/B)$ I/Os for
  each update in layer $X$. Summing up all layers, the
  amortized I/O cost for each update is
  $O(\frac{1}{B}\sum_{i=0}S(B\log^{(i)}\frac{N}{B}))$.
A similar argument shows that the amortized I/O cost for the
level pulls is the same,  except that  the I/O cost
is amortized only on the delete signals.  

\paragraph{Rebalancing the layers.} 
Consider a layer 
push operation on layer $X$.  It takes the
keys in level $l_X$ of layer $X$ and level $0$ of layer $\Psi_X$,
sorts them, and rebuilds both levels. Since both levels
have size $O(X)$, the I/O cost is $O(XS(X)/B)$. We also note that after a
layer push operation, it takes at least $\Theta(X)$ updates to level
$l_X$  before it goes unbalanced
again. That means at most $O(N/X)$ layer rebalance operations are
needed. So the I/O cost for the layer rebalances of
layer $X$ during the $N/8$ updates is $O(NS(X)/B)$. We can charge
$O(S(X)/B)$ I/Os for each update and each layer,  and summing up all
layers,  it is amortized 
$\frac{1}{B}\sum_{i=0} S(B\log^{(i)}\frac{N}{B})$ I/Os for each
update. Similar argument shows that the amortized I/O cost for
layer pulls is the same,  except that the total I/O cost
is amortized only on the delete signals.  

\paragraph{Scheduling the operations.} Note that in the schedule we
need to pay some extra I/Os for
maintaining the queue $Q_o$ and doubly linked list $L_o$.
We observe that an update to $Q_o$ or
$L_o$ would trigger a flush or rebalance operation later, and the cost of a flush or
rebalance operation  is at least $1$ I/O. So $Q_o$ and $L_o$ can be
maintained without increasing the asymptotic I/O cost.
%%% Local Variables: 
%%% mode: latex
%%% TeX-master: "paper"
%%% End: 

%\ifthenelse{\soda}{\newpage}{\newpage}
\bibliographystyle{abbrv} 
\bibliography{/csproject/yike/paper,/csproject/yike/io,/csproject/yike/geom,/csproject/yike/newgeom}

\end{document}